\newtheorem{theorem}{\bf Theorem}[section]
\newtheorem{corollary}[theorem]{\bf Corollary}
\newtheorem{lemma}[theorem]{\bf Lemma}
\newtheorem{definition}[theorem]{\bf Definition}
\title{Generalized Vertex Transitivity in Graphs}
\author{Kannan Balakrishnan\thanks{Email: mullayilkannan@gmail.com}}
\author{Divya Sindhu Lekha\thanks{Email: divi.lekha@gmail.com}}
\affil{Department of Computer Applications, Cochin University of Science and Technology, Kochi, IN-682022}
\author{Manoj Changat\thanks{Email:mchangat@gmail.com}}
\affil{Department of Futures Studies, University of Kerala, Thiruvananthapuram, IN-695581}
\author{Bijo S. Anand\thanks{Email: bijos\_anand@yahoo.com}}
\affil{Department of Mathematics, Sree Narayana College, Punalur, Kollam,IN-691305}
\author{Prasanth G. Narasimha-Shenoi\thanks{Email: prasanthgns@gmail.com}}
\affil{Department of Mathematics, Government College Chittur, Palakkad,IN-678104}
\begin{document}
	\date{}
	\maketitle
	\begin{abstract}
		In this paper, we introduce a generalization of vertex transitivity in graphs called generalized vertex transitivity. We put forward a new invariant called transitivity number of a graph. The value of this invariant in different classes of graphs is explored. Also, different results showing the importance of this concept is established.
		\flushleft{Keywords: Graph automorphism, Vertex transitivity,Graph products\\2010 Mathematics Subject Classification: 05C70,05C76}
	\end{abstract}
	
	\section{Introduction}
	Many complex phenomena and systems are modeled using graphs where the entities are represented by vertices and their relations by corresponding edges. A graph in which all vertices are  equivalent  is known as a vertex-transitive graph~\cite{CCh95}. It is clearly evident that vertex-transitive graphs are also regular, but not vice versa. Studying the properties of a single vertex in such a graph gives insight into the properties of the complete vertex set. 
	
	The notions of vertex-transitive graphs and the theory of transitive permutations groups brought together two important notions in Mathematics namely Group Theory and Graph Theory. Each finite permutation group corresponds to vertex transitive graphs. Almost all vertex transitive graphs give rise to many transitive permutation groups of the automorphism group of a graph. In short, the theory of vertex-transitive graphs has developed in parallel with the theory of transitive permutation groups, see~\cite{praeger1997finite}. Also Praeger et.al, in~\cite{praeger1997finite}, explored the different ways in which these two theories have influenced each 
	other. 
	
	Many important classes of graphs are not vertex transitive, but can be partitioned into subsets in which all vertices in the same partition are equivalent. This gives us the ability to concentrate on a smaller subset of the vertices in order to gain insights into the properties of the whole graph. Also this will lead to efficient computation of various graph invariants,as well as graph properties. 
	
	By a permutation, we mean a one-to-one and onto function from a set onto itself. We say that two graphs $G$ and $H$ are isomorphic, if there exists a bijection $\phi : G\rightarrow H$ such that $\{u,v\}\in E(G)$ if and only if $\{\phi(u),\phi(v)\}\in E(H)$. If $G=H$, then these bijections are called permutations and the isomorphisms are known as automorphisms. In other words, an \textit{automorphism} of a graph $G$ is a permutation of the vertices, $V(G)$, of $G$ that preserves adjacency. For a  formal definition of automorphism and automorphism group of $G$ see~\cite{Cam05}.
	A relation $R$ on a non-empty set $A$ is a subset of $A\times A$. $R$ is said to be \textit{reflexive} if $(a,a)\in R$ for all $a\in A$.  We say that $R$ is \textit{symmetric} whenever $(a,b)\in R\implies (b,a)\in R$ and $R$ is said to be \textit{transitive} if $(a,b), (b,c)\in R\implies (a,c)\in R$. $R$ is said to be an equivalence relation if $R$ satisfies the reflexive, symmetric and transitive properties. Every equivalence relation on a set $A$ partitions the set$A$, and each partition is called an \textit{equivalence class}. For a detailed study refer~\cite{fraleigh2003first}.
	
	A group $(G, *)$ is a non-empty set $G$, closed under a binary operation $\ast$, which satisfies associative property. Therefore, there exists an element $e$ in $G$ such that for all $a\in G$ we have $a*e=a=e*a$, and for each element $a\in G$, there exists a unique element $a'\in G$ such that $a*a'=e=a'*a$.  For simplicity, we write $ab$ instead of $a*b$. The set of all automorphisms on $G$ denoted by $Aut(G)$ is a group under the binary operation, $\odot$, the composition of permutations.  
	
	A graph $G$ is said to be \emph{vertex-transitive} if every vertex in $G$ can be mapped to any other vertex by some automorphism.  That is, given any two vertices $v_1$ and $v_2$ of $G$, there is some automorphism $ \phi:V(G)\rightarrow V(G)$ such that $\phi(v_{1})=v_{2}$. In this case, we say that $Aut(G)$ acts transitively on $V(G)$~\cite{Cam05}. For an in-depth understanding of the properties of vertex-transitive graphs, see works by Sabidussi~\cite{Sab64}, Babai~\cite{Bab91} and Cameron~\cite{Cam05}. 
	
	Many authors have attempted the construction of vertex-transitive graphs with a given order $n$. The very first attempt in this direction was by Yap~\cite{Yap73}, in which he found all vertex-transitive graphs of order up to $11$. Another attempt was by McKay in 1979~\cite{McK79} and 1990~\cite{McR90}, to find out graphs of order up to $19$ and at most $26$, respectively. Vertex-transitive graphs of order $p$, where $p$ is a prime number, are graphs with a $p$-cycle in its automorphism groups~\cite{Tur67}. Similarly, graphs of order which is a product of two primes attained special attention in studies like ~\cite{Mas94, PrX93, ZhF10}.
	
	One of the important class of vertex-transitive graphs which are widely studied in the century are the Cayley graphs which are the graphs constructed using finite groups. Some examples of Cayley graphs are Complete graphs and their complements, complete multi-partite graph $K_{r(s)}$, and $k$-dimensional cubes $Q_k$. The prominence of the study of Cayley graphs owes to its immense applications in research of networks. In networks, the major agenda is to preserve connectivity even if some nodes or links are compromised. In such situations, vertex-transitive graphs, particularly Cayley graphs~\cite{Als05} play an important role. Cayley graphs are used extensively in modelling interconnection networks. See ~\cite{CCS85}, ~\cite{Hey97},~\cite{LJD93}. A very recent study~\cite{LiL18} shows that symmetric properties of a graph are very relevant in the behaviour of interconnection networks.  
	%
	%

	In this paper, we investigate on the concept of generalized vertex transitivity in graphs and some important graph products. All graphs considered in this paper are connected, simple and undirected. 
	
	We formally define the relevant graph operations here. 
	Let $G$ and $H$ be two graphs. Then their \emph{Cartesian product} $G \Box H$ is the graph with vertex set $V(G) \times V(H)$ where vertices $(g, h)$ and $(g', h')$ are adjacent if either\\
	\begin{itemize}
		\item $gg' \in E(G)$ and $h = h'$, or
		\item $g = g'$ and $hh' \in E(H)$.
	\end{itemize}
	The {\it strong product} of graphs $G$ and $H,$ denoted by
	$G\boxtimes H,$ has vertex set $V(G)\times V(H),$ where two distinct
	vertices $(x_1, y_1)$ and $(x_2, y_2)$ are adjacent if,
	\begin{itemize}
		\item $x_1 = x_2$ and $y_1y_2 \in E(H),$ or
		\item $y_1 = y_2$  and  $x_1x_2\in E(G),$ or
		\item $x_1x_2\in  E(G)$ and $y_1y_2\in E(H)$.\\
	\end{itemize}
	The \textit{corona product} $G\lozenge H$ of two graphs $G$ and $H$ (where $G$ has $p$ vertices) is defined as the graph obtained by taking one copy of $G$ and $p$~copies of $H$, and joining the $i^\text{th}$ vertex of $G$ to each and every vertex in the $i^\text{th}$ copy of $H$ by an edge.
	The \textit{join} of two graphs $G$ and $H$ is a graph formed from disjoint copies of $G$ and $H$ by connecting each vertex of $G$ to each vertex of $H$.
	\section{Generalized vertex transitivity in graphs}
	\begin{definition}\label{def:int}
		A vertex $u$ in graph $G$ is said to be \emph{interchangeable} with vertex $v$ in $G$ if there is an automorphism $\sigma$ of $G$ such that $\sigma(u)=v$. We denote this by $u\Omega v$. 
	\end{definition}
	\begin{theorem}
		Interchangeability $\Omega$ is an equivalence relation on $V$.
	\end{theorem}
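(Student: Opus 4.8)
The plan is to verify directly that $\Omega$ satisfies the three defining properties of an equivalence relation---reflexivity, symmetry, and transitivity---by exploiting the fact, recorded earlier in the excerpt, that $\mathrm{Aut}(G)$ is a group under composition of permutations. Each of the three relational properties will correspond to one of the group axioms: reflexivity to the existence of an identity, symmetry to the existence of inverses, and transitivity to closure under composition. So the strategy is essentially to translate group-theoretic structure into relational structure.

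First I would establish reflexivity. For any vertex $u \in V$, the identity map $\mathrm{id}_V$ is an automorphism of $G$ (it trivially preserves adjacency), and $\mathrm{id}_V(u) = u$, so $u\,\Omega\,u$ holds for every $u$. Next, for symmetry, suppose $u\,\Omega\,v$, so that some automorphism $\sigma$ satisfies $\sigma(u) = v$. Since $\mathrm{Aut}(G)$ is a group, $\sigma^{-1}$ is again an automorphism of $G$, and applying it gives $\sigma^{-1}(v) = u$, whence $v\,\Omega\,u$. The only point worth stating explicitly here is that the inverse of a graph automorphism is again a graph automorphism, which is exactly what membership of $\sigma^{-1}$ in $\mathrm{Aut}(G)$ guarantees.

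For transitivity, suppose $u\,\Omega\,v$ and $v\,\Omega\,w$, witnessed by automorphisms $\sigma, \tau$ with $\sigma(u) = v$ and $\tau(v) = w$. The composite $\tau \odot \sigma$ lies in $\mathrm{Aut}(G)$ by closure, and $(\tau \odot \sigma)(u) = \tau(\sigma(u)) = \tau(v) = w$, so $u\,\Omega\,w$. I would conclude that $\Omega$ is reflexive, symmetric, and transitive, hence an equivalence relation on $V$.

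I do not anticipate a genuine obstacle, since the argument is a direct consequence of the group structure of $\mathrm{Aut}(G)$; the closest thing to a subtlety is simply being careful to invoke, at each step, the correct closure or inverse property of $\mathrm{Aut}(G)$ rather than re-proving from scratch that identities, inverses, and composites of automorphisms are automorphisms. Once the group structure is taken as given (as it is in the excerpt), the proof reduces to three one-line verifications.
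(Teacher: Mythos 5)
Your proposal is correct and follows essentially the same approach as the paper's proof: reflexivity via the identity automorphism, symmetry via $\sigma^{-1}$, and transitivity via the composite $\tau \odot \sigma$, all justified by the group structure of $\mathrm{Aut}(G)$. The only cosmetic difference is that you invoke the group axioms explicitly, while the paper states the same three verifications directly.
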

	\begin{proof}
		In order to prove that a relation is an equivalence, we need to prove it is reflexive, symmetric and transitive.
		
		\begin{description}
			\item[$\Omega$ is reflexive] Since every vertex $u$ in $G$ is mapped on to itself by the identity automorphism.
		\end{description} 
		\begin{description}
			\item[$\Omega$ is symmetric] If $u \Omega v$ then there is an automorphism $\sigma$ such that $\sigma(u) = v$. Now, the automorphism $\sigma^{-1}$ has the property that $\sigma^{-1}(v) = u$.
		\end{description}
		\begin{description}
			\item[$\Omega$ is transitive] If $u \Omega v$ and $v \Omega w$ then there exists automorphisms $\sigma_1$, $\sigma_2$ such that $\sigma_1(u) = v$ and $\sigma_2(v) = w$. Now, the composite $\sigma_2 \circ \sigma_1$ maps $u$ to $w$.
		\end{description}
		
		Since $\Omega$ is reflexive, symmetric and transitive; it is an equivalence relation.
	\end{proof}
	
	Since $\Omega$ is an equivalence relation, we can say that two vertices $u$ and $v$ are interchangeable if $u\Omega v$. Also, $\Omega$ partitions the vertex set $V(G)$ into a set of vertex classes; any two vertices in the same class are interchangeable. We denote this set by $\frac{V}{Aut(G)} = \{V_1, V_2, \ldots, V_r\}$. Each $V_i$ is called a transitivity partition or simply a partition. 
	
	\begin{definition}\label{def:num}
		\item[Transitivity number of a graph $G$,  $r_G$], is the number of transitivity partitions in $G$. $G$ is said to be $r$-transitive if its transitivity number is $r$.
	\end{definition}
	
	\begin{definition}\label{def:rep}
		\item[Transitive representation of a graph $G$,  $R_G$,] is the set $\{v_1, v_2, \ldots, v_r\}$ in which each $v_i \in V_i$ where $V_i$ is a transitivity partition in $G$. 
	\end{definition}
	
	We denote the partition in which a vertex $v$ belongs as $R(v)$. Clearly, $R(v) = Aut(G)(v)$. i.e. $R(v)$ is the same as the set of all $\sigma(v)$ where $\sigma$ is some automorphism of $G$. Also, every vertex in $R(v)$ has the same vertex properties like degree, eccentricity and vertex centralities. 
	\subsection{ Functions on vertex set}
	From above discussions, it is evident that the transitivity number of a graph is greater than or equal to the number of distinct elements in the degree sequence of the graph. Also, members of the same partition have same values for any function on $V$ which is preserved under graph automorphism. This fact has vast implications as we show in the next few theorems.
	See the obvious lemma, Lemma~\ref{lem:aut}.
	\begin{lemma}\label{lem:aut}
		Let $G$ be a graph. Then every automorphism of $G$ maps the sets $V_i$ onto themselves.
	\end{lemma}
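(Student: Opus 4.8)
The plan is to exploit the identification noted just before the statement, namely that each transitivity partition $V_i$ coincides with an orbit $Aut(G)(v_i)$ of the automorphism group, equivalently that $V_i$ is an equivalence class of the relation $\Omega$. With that in hand, the lemma is the standard fact that the orbits of a group action are invariant under the action, and I would prove it directly from the equivalence-relation properties established in the first theorem of this section.

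First I would fix an arbitrary automorphism $\sigma \in Aut(G)$ and an arbitrary class $V_i$, and establish the inclusion $\sigma(V_i) \subseteq V_i$. Take any $u \in V_i$. The automorphism $\sigma$ itself witnesses $\sigma(u) = \sigma(u)$, so $u$ and $\sigma(u)$ are interchangeable, i.e. $u\,\Omega\,\sigma(u)$. Since $\Omega$ is an equivalence relation, $u$ and $\sigma(u)$ lie in the same equivalence class; as $u \in V_i$, this forces $\sigma(u) \in V_i$. Hence $\sigma$ carries $V_i$ into $V_i$.

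It then remains to upgrade this inclusion to an equality. One clean route is to apply the same argument to the automorphism $\sigma^{-1}$, which exists because $Aut(G)$ is a group: this yields $\sigma^{-1}(V_i) \subseteq V_i$, and hence $V_i \subseteq \sigma(V_i)$, which together with the first inclusion gives $\sigma(V_i) = V_i$. Alternatively, since $V(G)$ is finite and $\sigma$ is a bijection, an injective self-map of the finite set $V_i$ is automatically onto; either route closes the argument.

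The point is that there is essentially no obstacle here — this is exactly the invariance of orbits under a group action, which is why the authors call it obvious. The only place requiring a moment's care is the passage from ``into'' to ``onto''; everything else is an immediate consequence of $\Omega$ being an equivalence relation together with the group structure of $Aut(G)$.
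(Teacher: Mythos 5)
Your proof is correct. The paper in fact supplies no proof at all for this lemma --- it is merely announced as ``the obvious lemma'' --- and your argument (each $u \in V_i$ satisfies $u\,\Omega\,\sigma(u)$ via $\sigma$ itself, giving $\sigma(V_i) \subseteq V_i$, then upgrading to equality by applying the same reasoning to $\sigma^{-1}$) is exactly the standard orbit-invariance justification the authors presumably had in mind, with the into-versus-onto point handled properly.
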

	\begin{definition}
		Let $f$ be a function on graph $G$ with vertex set $V$; $f: V^* \rightarrow 2^V - \phi$. $M$ is said to be preserved under automorphisms if for every automorphism $\sigma$ , $\sigma(f(V)) = f(\sigma(V))$.
	\end{definition}
	Now we prove a more powerful lemma.
	
	\begin{lemma}\label{lem:fn}
		Let $f$ be a function on $G$, $f: V^* \rightarrow 2^V - \phi$, which is preserved under automorphisms. Let $V_1, V_2, \ldots, V_r$ be the partitions in $G$. Then $f(V_i)$ is always the union of some $V_j$s.
	\end{lemma}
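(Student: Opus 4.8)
The plan is to show that $f(V_i)$ is invariant under every automorphism of $G$, and then to argue that any automorphism-invariant set of vertices must be a union of transitivity classes. Both ingredients are already at hand: Lemma~\ref{lem:aut} tells us that every automorphism $\sigma$ satisfies $\sigma(V_i) = V_i$, while the hypothesis that $f$ is preserved under automorphisms gives $\sigma(f(S)) = f(\sigma(S))$ for each $S$ in the domain $V^*$.

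First I would combine these two facts. Fixing a transitivity class $V_i$ and an arbitrary automorphism $\sigma$, the preservation property yields $\sigma(f(V_i)) = f(\sigma(V_i))$, and Lemma~\ref{lem:aut} lets me replace $\sigma(V_i)$ by $V_i$ on the right-hand side. Hence $\sigma(f(V_i)) = f(V_i)$ for every $\sigma \in Aut(G)$; that is, the set $f(V_i)$ is fixed setwise by the whole automorphism group.

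Next I would show that this invariance forces $f(V_i)$ to be a union of transitivity classes. Pick any $w \in f(V_i)$ and let $V_j = R(w)$ be the class containing it. For any other $w' \in V_j$ there is, by definition of interchangeability, an automorphism $\sigma$ with $\sigma(w) = w'$. Since $w \in f(V_i)$ and $f(V_i)$ is $\sigma$-invariant, $w' = \sigma(w) \in \sigma(f(V_i)) = f(V_i)$. Thus whenever $f(V_i)$ meets a class $V_j$ it contains all of $V_j$, so $f(V_i)$ equals the union of exactly those $V_j$ it intersects, as claimed.

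The argument is essentially routine once the set-up is pinned down; the part I would be most careful about is the bookkeeping of the preservation identity $\sigma(f(V_i)) = f(\sigma(V_i))$ together with the setwise-fixed conclusion of Lemma~\ref{lem:aut}, making sure these are applied to the class $V_i$ as an element of the domain rather than to individual vertices. I do not expect a genuine obstacle beyond this notational care.
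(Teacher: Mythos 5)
Your proof is correct, but it takes a genuinely different route from the paper's. The paper first decomposes $f(V_i)$ as $\bigcup_{t} f(\{v_{i_t}\})$ over the individual vertices of $V_i$, and then argues that each singleton image $f(\{v_{i_k}\})$ must contain the full transitivity class of any of its elements; summing over $k$ gives the claim. That argument implicitly assumes $f$ distributes over unions (i.e.\ that the image of the class equals the union of the images of its singletons), which is not part of the hypothesis, and its class-closure step relies on interchangeability in a way that never actually invokes Lemma~\ref{lem:aut}. Your route is both shorter and tighter: you apply the preservation identity $\sigma(f(V_i)) = f(\sigma(V_i))$ to the class $V_i$ itself, use Lemma~\ref{lem:aut} to replace $\sigma(V_i)$ by $V_i$, conclude that $f(V_i)$ is fixed setwise by all of $Aut(G)$, and then invoke the standard fact that a set invariant under a group action is a union of orbits. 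This avoids the unjustified additivity assumption entirely, and it makes genuine use of Lemma~\ref{lem:aut}, which the paper states as preparation for this lemma but never actually uses in its own proof. In short, your argument buys rigor and economy; the paper's pointwise decomposition would only add value if the extra additivity property of $f$ were assumed, which it is not.
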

	\begin{proof}
		Let $V_i = \{v_{i_1}, v_{i_2}, \ldots, v_{i_q}\}$ be a partition. Then $f(V_i) = \bigcup\limits_{t= 1}^q f(\{v_{i_t}\})$. We need to prove that $f(V_i) = \bigcup\limits_{j= 1}^pV_j$; $p \le r$.
		
		Let $f(\{v_{i_k}\}) = \{u_1, u_2, \ldots, u_k\} = U_k$ where $k \in (1,t)$. Consider vertex $u_l \in U_k$, $l\in (1,k)$. 
		
		Let $V_m = R(u_l)$, the partition containing $u_l$ as an element. By definition of interchangeability~\ref{def:int}, we can say that :
		$$\forall x \ (x \in V_m) \implies (x \in U_k)$$
		Thus, $f(\{v_{i_k}\}) = U_k = \bigcup\limits_{j= 1}^pV_j$; $p\le r$. 
		
		The same holds true for all $v_{i_k} (k \in (1,q))$. Therefore, without loss of generality, we can say that $f(V_i) = \bigcup\limits_{j= 1}^pV_j$; $p \le r$.
	\end{proof}
	\begin{theorem}\label{thm:fn}
		Let $G$ be an $r$-transitive graph with vertex set $V$ and partitions $V_1, V_2, \ldots, V_r$. Let $f$ be a function defined on $G$, $F: V^* \rightarrow 2^V - \phi$, which is preserved under automorphism. Then $f(V)$ is always the union of one or more $V_i$. 
	\end{theorem}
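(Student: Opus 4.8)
The plan is to reduce the statement directly to Lemma~\ref{lem:fn}, since that lemma already delivers the desired conclusion one partition at a time. First I would write the whole vertex set as the union of its transitivity partitions, $V = \bigcup_{i=1}^{r} V_i$, which is legitimate because $\Omega$ is an equivalence relation and the $V_i$ are precisely its equivalence classes, so they cover $V$.

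Next I would apply the same distributivity of $f$ over unions that was used inside the proof of Lemma~\ref{lem:fn} (where $f(V_i)$ was split as $\bigcup_t f(\{v_{i_t}\})$), namely
\[
f\left(\bigcup_{i=1}^{r} V_i\right) = \bigcup_{i=1}^{r} f(V_i),
\]
to obtain $f(V) = \bigcup_{i=1}^{r} f(V_i)$. By Lemma~\ref{lem:fn}, each $f(V_i)$ is itself a union of some subfamily of the partitions $V_j$. Substituting, $f(V)$ is exhibited as a union of unions of partitions, which collapses to a single union of partitions. Because $f$ maps into $2^V - \phi$, the image $f(V)$ is nonempty, so at least one partition must occur in this union; this yields the phrase \emph{one or more} $V_i$ in the statement.

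The only point that genuinely needs care — and really the sole obstacle — is justifying the distributive step $f(V) = \bigcup_i f(V_i)$, since an arbitrary set function need not commute with unions. This property is exactly what is tacitly invoked in the derivation of Lemma~\ref{lem:fn}, so I would state it explicitly (or flag it as the same consequence of $f$ being preserved under automorphisms and applied in the lemma) to keep the reduction airtight. Once that is in place, everything else is a routine set-theoretic manipulation carrying no additional content, and the theorem follows immediately from the lemma.
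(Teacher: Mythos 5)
Your proposal is correct and follows essentially the same route as the paper's own proof: decompose $V$ into the partitions $V_1,\ldots,V_r$, write $f(V)=f(V_1)\cup\cdots\cup f(V_r)$, and invoke Lemma~\ref{lem:fn} on each piece. Your explicit flagging of the distributive step $f\bigl(\bigcup_i V_i\bigr)=\bigcup_i f(V_i)$ is a worthwhile clarification, since the paper invokes it tacitly both here and inside the lemma, but it does not change the argument.
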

	\begin{proof}
		Since $V_1, V_2, \ldots, V_r$ are partitions of $V(G)$, we can write $f(V)$ as $f(V_1) \bigcup f(V_2) \bigcup \ldots f(V_r)$. Now, by lemma~\ref{lem:fn}, we know that $f(V)$ is a union of one or more $V_i$. Thus proves the theorem.
	\end{proof}
	
	Theorem~\ref{thm:fn} holds for functions including consensus functions like center,betweenness center and median of a graph. 
	
	\begin{lemma}\label{lem:fnR}
		Let $G$ be an $r$-transitive graph with vertex set $V$. Let $f$ be a function defined on $V$, $f: V \rightarrow \mathbb{R}$, which is preserved under automorphism. Then the maximum number of distinct values in $f(V)$ is $r$.
	\end{lemma}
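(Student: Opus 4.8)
The plan is to show that any automorphism-preserved real-valued function must be constant on each transitivity partition, so the number of distinct values it can attain is bounded by the number $r$ of such partitions.

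First I would pin down the meaning of \emph{preserved under automorphism} in this scalar setting: for a function $f : V \to \mathbb{R}$, preservation should mean $f(\sigma(v)) = f(v)$ for every automorphism $\sigma$ of $G$ and every vertex $v \in V$. This is the scalar analogue of the set-valued condition $\sigma(f(V)) = f(\sigma(V))$ used earlier, specialized to a codomain on which $Aut(G)$ acts trivially. Real-valued invariants such as degree, eccentricity, and the various vertex centralities satisfy exactly this condition, which is why the lemma is useful.

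The key step is to prove that $f$ is constant on each partition $V_i$. Let $u, v \in V_i$ be two vertices of the same transitivity partition. Since $V_i$ is an equivalence class of the interchangeability relation $\Omega$ (Definition~\ref{def:int}), we have $u \, \Omega \, v$, so there is an automorphism $\sigma$ with $\sigma(u) = v$. Applying the preservation property gives $f(v) = f(\sigma(u)) = f(u)$. As $u$ and $v$ were arbitrary in $V_i$, the function $f$ assigns a single value, say $c_i$, to every vertex of $V_i$.

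Finally, since $V_1, V_2, \ldots, V_r$ partition $V$, every vertex lies in exactly one $V_i$, so $f(V) \subseteq \{c_1, c_2, \ldots, c_r\}$ and $f$ takes at most $r$ distinct values. To confirm that this bound is sharp I would exhibit a witness: define $f$ by $f(v) = i$ whenever $v \in V_i$; this $f$ is constant on orbits, hence preserved under every automorphism, and it realizes exactly $r$ distinct values. I do not expect a genuine obstacle here, since the result follows directly from the orbit structure already recorded in Lemma~\ref{lem:aut}; the only point requiring care is stating the invariance hypothesis correctly for a scalar-valued $f$, because the definition given earlier was phrased for set-valued functions.
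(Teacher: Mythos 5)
Your proof is correct, and its core coincides with the paper's: both arguments come down to the fact that $f$ must be constant on each transitivity partition, which immediately bounds the number of distinct values by $r$. The difference is in the route. The paper first decomposes $f(V) = f(V_1) \cup f(V_2) \cup \ldots \cup f(V_r)$ by citing Lemma~\ref{lem:fn}, and then asserts $\sigma(f(V_i)) = f(V_i)$ to conclude each $f(V_i)$ is a single value. This is somewhat awkward: Lemma~\ref{lem:fn} and the paper's definition of ``preserved under automorphisms'' are phrased for set-valued functions $f: V^* \rightarrow 2^V - \phi$, so applying them to a scalar $f: V \rightarrow \mathbb{R}$ is a type mismatch (an automorphism $\sigma$ acts on vertices, not on real numbers, so $\sigma(f(V_i))$ is not literally meaningful). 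You identified exactly this issue and resolved it by restating invariance in the scalar setting as $f(\sigma(v)) = f(v)$, then arguing directly from interchangeability: for $u, v \in V_i$ there is $\sigma$ with $\sigma(u) = v$, hence $f(v) = f(\sigma(u)) = f(u)$. Your version is thus self-contained, avoids the dubious citation, and is the argument the paper implicitly intends. You also add something the paper omits: a witness function ($f(v) = i$ for $v \in V_i$) showing the bound $r$ is actually attained, which justifies the word ``maximum'' in the statement rather than merely an upper bound.
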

	\begin{proof}
		Let $V_1, V_2, \ldots, V_r$ be the partitions in $G$. 
		$$f(V) = f(V_1) \bigcup f(V_2) \bigcup \ldots f(V_r) \textnormal{ (By lemma~\ref{lem:fn}}).$$
		Since $f$ is preserved under automorphism, we can say that $\sigma(f(V_i)) = f(V_i)$ where $\sigma$ is an automorphism. Therefore there is only one unique value in $f(V_i)$. Thus the maximum number of distinct values for $f(V)$ is $r$.
	\end{proof}
	As an example, let us consider the \emph{total distance $D$} of a vertex $v$ in an $r$-transitive graph $G$ with vertex set $V$. 
	$$D(v) = \sum\limits_{u \in V}d(u,v).$$
	Let $v_1$ and $v_2$ be two vertices in a partition $V_i$. Since $v_1 \Omega v_2$, there exists an automorphism $\sigma$ such that $\sigma(v_1) = v_2$. Therefore $D(v_1) = D(v_2)$. Hence, for each $v_i \in V_i$, the total distance $D(v_i)$ is the same value. Thus, there are $r$ distinct values of $D$ in a $r$-transitive graph.
	
	The following theorem is a direct implication of the lemma~\ref{lem:fnR}.
	\begin{theorem}\label{thm:fnR}
		Let $G$ be an $r$-transitive graph with vertex set $V$. Let $f$ be a function defined on $V$, $f: V \rightarrow \mathbb{R}$, which is preserved under automorphism. Then 
		$ |\{f(v_i)\}|\le r,\ v_i \in V$
	\end{theorem}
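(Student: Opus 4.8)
The plan is to recognize that this theorem is essentially a notational repackaging of Lemma~\ref{lem:fnR}, so the argument will be very short. The set $\{f(v_i) : v_i \in V\}$ is, by definition, the collection of distinct real numbers attained by $f$ as $v_i$ ranges over $V$; that is, it is precisely the image $f(V)$ regarded as a subset of $\mathbb{R}$ with repeated values collapsed. Its cardinality is therefore exactly the number of distinct values of $f$ on $V$.

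First I would verify that the hypotheses of the theorem match those of Lemma~\ref{lem:fnR}: here $G$ is $r$-transitive and $f\colon V \to \mathbb{R}$ is preserved under automorphism, which is exactly the setting of that lemma. I would then invoke Lemma~\ref{lem:fnR} directly, which asserts that the maximum number of distinct values in $f(V)$ is $r$. Since $|\{f(v_i)\}|$ is precisely this count of distinct values, the lemma yields $|\{f(v_i)\}| \le r$, completing the argument.

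The one point worth recording explicitly, and the only step carrying real content, is the mechanism underlying Lemma~\ref{lem:fnR}, namely that $f$ is constant on each transitivity partition: if $u,w \in V_i$ then $u\,\Omega\,w$, so by Definition~\ref{def:int} some automorphism $\sigma$ sends $u$ to $w$, and invariance of $f$ under $\sigma$ forces $f(u)=f(w)$. As there are exactly $r$ such partitions, at most $r$ distinct values can occur. I do not expect any genuine obstacle here; the theorem is a corollary whose entire substance lies in the preceding lemma, and the proof amounts to identifying $|\{f(v_i)\}|$ with the number of distinct values in $f(V)$ and citing Lemma~\ref{lem:fnR}.
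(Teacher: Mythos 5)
Your proposal is correct and matches the paper's own proof: both simply identify $|\{f(v_i)\}|$ with the number of distinct values in $f(V)$ and cite Lemma~\ref{lem:fnR}. Your added remark explaining the underlying mechanism (that $f$ is constant on each transitivity partition) is a harmless elaboration of that lemma, not a different argument.
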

	\begin{proof}
		Lemma~\ref{lem:fnR} implies that there are maximum $r$ distinct values for $f(V)$. Therefore, for every $v_i \in V,\  |\{f(v_i)\}|\le r$
	\end{proof}
	
	Next, we focus on the vertex properties which are invariant under automorphism.
	\begin{definition}
		\item[A vertex property] $P: V \times 2^V \rightarrow \mathbb{R}$ is invariant under automorphisms if for every automorphism $\sigma$ , $P(\sigma(V)) = P(V)$.
	\end{definition}
	Examples are vertex degree $\delta(v)$, vertex eccentricity $e(V) = \max\limits_{u \in V}d(u,v)$ and total distance $D(v)=\sum\limits{u \in V}d(u,v)$.
	\begin{lemma}\label{lem:Prop}
		Let $G$ be an $r$-transitive graph with partitions $V_1, V_2, \ldots, V_r$. Let $P$ be a vertex property of $G$ invariant under $Aut(G)$. Then 
		$$\{x|P(x)=y\} =\bigcup V_i \ or\  \phi$$
	\end{lemma}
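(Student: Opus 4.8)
The plan is to reduce the statement to the single observation that $P$ is constant on each transitivity partition $V_i$, after which the structure of the level sets follows immediately. First I would unwind the definitions: if two vertices $u$ and $v$ lie in the same partition $V_i$, then $u \Omega v$, so by Definition~\ref{def:int} there is an automorphism $\sigma$ with $\sigma(u) = v$. Since $P$ is invariant under $Aut(G)$, applying $\sigma$ leaves the property unchanged, giving $P(u) = P(v)$. Hence $P$ assumes a single value, say $y_i$, on all of $V_i$. This is essentially the content already recorded for the special case of total distance $D$ in the discussion preceding Lemma~\ref{lem:fnR}; here I would simply make it explicit for an arbitrary invariant vertex property.

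Second, I would fix a real value $y$ and consider the level set $\{x \mid P(x) = y\}$. Because $\Omega$ partitions $V(G)$, every vertex $x$ belongs to exactly one partition $V_i$, and $x$ lies in the level set precisely when the constant value $y_i$ of that partition equals $y$. Consequently, membership of a single vertex $x$ forces the entire partition $V_i$ into the level set, and the level set is exactly the union of those partitions whose common value is $y$. If no partition has common value $y$, this union is empty, which yields the two cases $\bigcup V_i$ or $\phi$ asserted in the statement.

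The argument is short and I do not anticipate a genuine obstacle; the only point requiring care is the reading of the hypothesis that $P$ is invariant under automorphisms. Since $P$ is declared with signature $P: V \times 2^V \rightarrow \mathbb{R}$, one should interpret $P(x)$ in the statement as $P(x, V)$ with the full vertex set in the second slot, and invariance as $P(\sigma(x), \sigma(V)) = P(x, V)$. Because $\sigma(V) = V$ for any automorphism $\sigma$ (the automorphism permutes $V$ onto itself), this collapses to $P(\sigma(x), V) = P(x, V)$, which is exactly the identity the first step uses. Making this interpretation explicit is the sole subtlety, and once it is fixed the equivalence-class reasoning closes the proof.
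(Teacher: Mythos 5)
Your proof is correct and follows essentially the same route as the paper's: both use invariance under automorphisms together with the definition of interchangeability to show that $P$ is constant on each transitivity partition $V_i$, and then conclude that any level set $\{x \mid P(x)=y\}$ is a union of partitions or empty. If anything, your version is cleaner --- the paper detours through Lemma~\ref{lem:fnR} and uses some confusing notation (e.g.\ ``$\{x\} = \bigcup V_i$''), whereas you derive the level-set structure directly from constancy on the partitions and also make explicit the reading of the signature $P: V \times 2^V \rightarrow \mathbb{R}$, which the paper glosses over.
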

	\begin{proof}
		$P$ is a vertex property of $G$. Therefore $P$ is invariant under graph automorphisms. Let $\sigma$ be an automorphism. Then, we can state that $P(\sigma(x)) = P(x)$. So, for any two vertices $u,v \in V_i (i=1, \ldots, r)$, $P(u) = P(v) = y.$ Therefore, $P(V_i) = y$. By lemma~\ref{lem:fnR}, we know that $P(V) \le r$. Let $\{x\} = \bigcup\limits_{i = 1}^k V_i$, $k \le r$. Then $y=P(x)$ is a set where $|y| \le r$. Thus, $\{x|P(x)=y\}$ is either a union of many partitions, $\bigcup V_i$, or an empty set.
	\end{proof}
	Note that this lemma gives rise to the following theorem.
	\begin{theorem}\label{thm:Prop}
		Let $G$ be an $r$-transitive graph and $P$ be a vertex property of $G$ invariant under $Aut(G)$. Then there are at most $r$ distinct values for $P$ in $G$.
	\end{theorem}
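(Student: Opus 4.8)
The plan is to derive the bound directly from the partition structure already established, since Lemma~\ref{lem:Prop} does essentially all the work. First I would recall that the interchangeability relation $\Omega$ partitions $V(G)$ into exactly $r$ transitivity classes $V_1, V_2, \ldots, V_r$, so that every vertex of $G$ lies in precisely one $V_i$. The whole argument then reduces to showing that the property $P$ is constant on each such class; once that is in hand, the image $P(V)$ can contain at most one value per class, giving at most $r$ distinct values overall.

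The central step is to verify constancy of $P$ on a single partition. I would fix a class $V_i$ and take any two vertices $u, v \in V_i$. By Definition~\ref{def:int}, $u \Omega v$ means there is an automorphism $\sigma \in Aut(G)$ with $\sigma(u) = v$. Since $P$ is invariant under $Aut(G)$, applying the invariance identity $P(\sigma(x)) = P(x)$ at $x = u$ yields $P(v) = P(u)$. As $u, v$ were arbitrary in $V_i$, the property $P$ assumes a single value on $V_i$; call it $y_i$. This is exactly the content extracted from Lemma~\ref{lem:Prop}, which already records that the level set $\{x \mid P(x) = y\}$ is a union of partitions (or empty), so I may simply cite that lemma rather than reprove the constancy.

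To finish, I would choose a transitive representation $R_G = \{v_1, \ldots, v_r\}$ with $v_i \in V_i$ (Definition~\ref{def:rep}) and observe that, because $V = V_1 \cup V_2 \cup \cdots \cup V_r$ and $P$ is constant equal to $y_i = P(v_i)$ on $V_i$, we have the inclusion $P(V) \subseteq \{y_1, y_2, \ldots, y_r\}$. Hence $|P(V)| \le r$, i.e. there are at most $r$ distinct values for $P$ in $G$. I would also note the parallel route through Lemma~\ref{lem:fnR}: a real-valued property invariant under automorphisms is a special case of the functions treated there, and that lemma already caps the number of distinct values at $r$.

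I do not expect any genuine obstacle here, since the statement is a direct corollary of the preceding lemmas; the only point requiring care is bookkeeping, namely ensuring that the $r$ partitions exhaust $V(G)$ so that no value of $P$ is missed, and applying the invariance hypothesis in the correct direction $P(\sigma(x)) = P(x)$ rather than conflating it with the set-valued preservation condition used for functions into $2^V - \phi$. Making the distinction between the scalar property $P: V \times 2^V \rightarrow \mathbb{R}$ and the set-valued $f$ explicit is the one subtlety worth flagging, but it does not affect the counting conclusion.
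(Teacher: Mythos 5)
Your proposal is correct and follows essentially the same route as the paper: it rests on the fact that $P$ is constant on each transitivity class (the content of Lemma~\ref{lem:Prop}, which the paper's proof cites directly and which you both cite and re-derive from the invariance identity $P(\sigma(x))=P(x)$), and then counts at most one value per class across the $r$ classes covering $V$. Your write-up is in fact a cleaner and more carefully bookkept version of the paper's argument, but it is not a different approach.
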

	\begin{proof}
		Lemma~\ref{lem:Prop} implies that if $P(x)=y$, then $\{x\}$ is either a union of partitions in $G$ or an empty set. Therefore, we can say that there is at least a unique value for $P(x)$ if x is a union of partitions $V_i$.  Since there are $r$ partitions in $G$, there can be a maximum of $r$ distinct values for $P$ in $G$.
	\end{proof}
	The power of generalized vertex transitivity can be utilized in inferring the behavior of graphs under expansions and contractions. In the next section, we consider the generalized vertex transitivity in Cartesian products. 
	\section{Transitivity Number and Graph Operations}
	First, we observe the Cartesian product of paths. 
	\begin{lemma}\label{lem:Pmn}
		Let $P_m$ be a path of size $m$ and $P_n$ be a path of size $n$ such that $m \neq n$.  Let $r_m$ and $r_n$ be their respective transitivity numbers. Then their Cartesian product is $r_mr_n$ transitive.
	\end{lemma}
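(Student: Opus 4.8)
The plan is to determine $Aut(P_m \Box P_n)$ explicitly and then read off its orbits, since by Definition~\ref{def:num} the transitivity number of a graph is precisely the number of orbits (transitivity partitions) of its automorphism group acting on the vertex set. So the whole argument reduces to identifying the automorphism group of the product and counting its orbits.

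First I would record the orbit structure of a single path. Every automorphism of $P_m$ must preserve its unique linear arrangement up to direction, so $Aut(P_m) = \{id, \rho\}$, where $\rho$ is the end-to-end reflection. The orbits are therefore the reflection-pairs $\{v, \rho(v)\}$, with the central vertex a singleton when the path has odd order; by definition the number of these orbits is $r_m$, and likewise $P_n$ has $r_n$ orbits.

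The crux is the claim that, because $m \neq n$,
$$Aut(P_m \Box P_n) = Aut(P_m) \times Aut(P_n),$$
acting coordinatewise: every automorphism $\sigma$ satisfies $\sigma(g,h) = (\alpha(g), \beta(h))$ for some $\alpha \in Aut(P_m)$ and $\beta \in Aut(P_n)$. I would deduce this from two classical facts about the Cartesian product: first, each $P_k$ with $k \geq 2$ is prime (a nontrivial Cartesian product has minimum degree at least $2$, whereas a path has vertices of degree $1$); second, by the Sabidussi--Vizing theory of unique prime factorization, every automorphism of a connected Cartesian product is generated by automorphisms of the prime factors together with permutations of pairwise isomorphic factors. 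Since $m \neq n$ forces $P_m \not\cong P_n$, no factor-swapping permutation is available, leaving only the coordinatewise maps. This is exactly the point where the hypothesis $m \neq n$ enters.

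Granting the decomposition, the final step is routine: the orbit of $(g,h)$ under $Aut(P_m) \times Aut(P_n)$ is the product of the orbit of $g$ under $Aut(P_m)$ with the orbit of $h$ under $Aut(P_n)$. Hence the transitivity partitions of $P_m \Box P_n$ are precisely the products $V_i \times W_j$ of the partitions of the two factors, and their number is $r_m r_n$. The main obstacle is thus the middle step, excluding the diagonal symmetry, and it is worth stressing why it is genuinely needed: when $m = n$ the swap $(g,h) \mapsto (h,g)$ is an extra automorphism that fuses distinct product-orbits, so the count would drop below $r_m r_n$.
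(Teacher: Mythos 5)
Your proof is correct, but it takes a genuinely different route from the paper's. The paper argues directly at the level of orbits: it asserts an interchangeability criterion for vertices of $P_m \Box P_n$ (coordinates lying in common partitions of the respective factors), writes the resulting partition as $\Pi_1 \times \Pi_2$, and counts $r_m r_n$ classes; the hard direction --- that no automorphism of the product can merge two distinct classes $U_i \times V_j$ and $U_p \times V_q$ --- is essentially taken for granted in that proof, with a projection argument supplying it only later, in the proof of Theorem~\ref{cartesian}. You instead pin down the full automorphism group first: paths with at least one edge are prime with respect to $\Box$ (your minimum-degree argument is valid, since a connected nontrivial Cartesian product has minimum degree at least $2$), and since $m \neq n$ the two prime factors are non-isomorphic, so the Sabidussi--Vizing unique-factorization theory gives $Aut(P_m \Box P_n) = Aut(P_m) \times Aut(P_n)$ acting coordinatewise, after which the orbit count is immediate. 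Your approach buys rigor exactly at the step the paper glosses over --- ruling out ``extra'' automorphisms --- at the cost of invoking classical structure theory of Cartesian products, which the paper never uses. It also isolates cleanly why the hypothesis $m \neq n$ matters (it forbids the factor swap), which is consistent with the paper's Lemma~\ref{lem:Pn}, where the diagonal symmetry in $P \Box P$ fuses classes and reduces the count to $\frac{r(r+1)}{2}$.
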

	\begin{proof}
		Let $G_1 = P_m$ and $G_2 = P_n$ with partitions $\Pi_1 = \{U_1, U_2, \ldots, U_{r1}\}$ and $\Pi_2 = \{V_1, V_2, \ldots, V_{r2}\}$ respectively. Let $G = G_1 \Box G_2$.  Let $(a_1,a_2)$ and $(b_1,b_2)$ be two vertices in $G$. Then $(a_1, a_2)$ and $(b_1, b_2)$ are interchangeable if  either $a_1 = b_1$ and  $a_2, b_2 \in V_i$  where $V_i$ is a partition in $\Pi_2$, 
		or  $a_2 = b_2$ and  $a_1, b_1 \in U_j$  where $U_j$ is a partition in $\Pi_1$. Following this we can get the partition for $G$ as  
		\begin{align*}\Pi &= \{U_1 \times V_1, U_2 \times V_1,  \ldots, U_{r1} \times V_1, \\&U_1 \times V_2, U_2 \times V_2,  \ldots, U_{r1} \times V_2, \\& \ldots, \\& U_1 \times V_{r2}, U_2 \times V_{r2},  \ldots, U_{r1} \times V_{r2}  \}
		\end{align*}
		There are ($r_1 \times r_2$) unique elements in $\Pi$. Therefore the transitivity number of $G$ is $r_1 \times r_2$.
	\end{proof}
	
	\begin{lemma}\label{lem:Pn}
		Let $P$ be a path of size $n$ with transitivity number $r$. Then the transitivity number of the Cartesian product $P \Box P$ is $\frac{r \times (r+1)}{2}$.
	\end{lemma}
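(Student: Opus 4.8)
The plan is to bootstrap from Lemma~\ref{lem:Pmn}, treating $P \Box P$ as the diagonal case $m=n$ that the earlier lemma explicitly excluded. Write $\Pi = \{U_1, U_2, \ldots, U_r\}$ for the transitivity partition of $P$. The subgroup of $Aut(P \Box P)$ consisting of the coordinate-wise automorphisms $(a,b) \mapsto (\sigma_1(a), \sigma_2(b))$ with $\sigma_1, \sigma_2 \in Aut(P)$ produces exactly the $r^2$ product blocks $U_i \times U_j$, just as in Lemma~\ref{lem:Pmn}. The point of the present lemma is that equality of the two factors supplies one further automorphism, namely the coordinate swap $\tau\colon (a,b) \mapsto (b,a)$, which was unavailable when $m \neq n$, and that this is the only additional symmetry.

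First I would record that $\tau$ is genuinely an automorphism of $P \Box P$: the adjacency condition for the Cartesian product is symmetric in the two factors, so swapping coordinates preserves edges whenever the factors coincide. Next I would check how $\tau$ acts on the product blocks. Since $a \in U_i$ and $b \in U_j$ give $\tau(a,b) = (b,a) \in U_j \times U_i$, the swap carries $U_i \times U_j$ onto $U_j \times U_i$. For $i \neq j$ these two blocks are disjoint, because $U_i \cap U_j = \phi$, and are therefore fused into a single transitivity class; for $i = j$ the block $U_i \times U_i$ is $\tau$-invariant and survives as one class. One then verifies that each fused set really is a single orbit: the coordinate-wise automorphisms already act transitively on each $U_i \times U_j$ (because each $U_k$ is an orbit in $P$), and $\tau$ joins the two halves.

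The step I expect to be the main obstacle is ruling out any further collapsing, that is, confirming that $Aut(P \Box P)$ is generated by the coordinate-wise automorphisms together with $\tau$, so that no automorphism identifies two blocks carrying distinct unordered level-pairs. For $n \geq 2$ the path is prime with respect to the Cartesian product and satisfies $Aut(P) \cong \mathbb{Z}_2$, so the structure theorem for automorphism groups of Cartesian products yields $Aut(P \Box P) \cong (\mathbb{Z}_2 \times \mathbb{Z}_2) \rtimes S_2$, the dihedral group of order $8$; alternatively one can argue directly from the grid geometry, using the degree pattern of corner, boundary and interior vertices to pin down the eight symmetries of the square. Either way the transitivity classes are exactly the unordered pairs of path-levels.

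Finally I would count. The classes correspond bijectively to unordered pairs $\{i,j\}$ with $i,j \in \{1, \ldots, r\}$ and repetition allowed: there are $\binom{r}{2}$ off-diagonal classes $U_i \times U_j \cup U_j \times U_i$ with $i < j$, together with the $r$ diagonal classes $U_i \times U_i$, for a total of $\binom{r}{2} + r = \frac{r(r+1)}{2}$, as claimed.
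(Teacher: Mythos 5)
Your proof is correct, and its counting skeleton is the same as the paper's: form the $r^2$ blocks $U_i \times U_j$, fuse $U_i \times U_j$ with $U_j \times U_i$, and count $\binom{r}{2} + r = \frac{r(r+1)}{2}$ unordered pairs. The difference is in what is actually justified. The paper's proof writes down the product blocks (via an appeal, left dangling in the text, to the argument of Lemma~\ref{lem:Pmn}) and then simply asserts that ``$U_i \times U_j$ is same as $U_j \times U_i$'' --- which is false as an equality of sets when $i \neq j$, since those blocks are disjoint; what is meant, and what you make explicit, is that the coordinate swap $(a,b) \mapsto (b,a)$ is an automorphism precisely because the two factors coincide, and it merges the two blocks into one orbit. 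More importantly, you address the question the paper never raises in either Lemma~\ref{lem:Pmn} or Lemma~\ref{lem:Pn}: why no \emph{further} collapsing can occur, i.e.\ why no automorphism of the grid identifies blocks with distinct unordered level-pairs. Your appeal to the primality of paths under $\Box$ and the structure theorem $Aut(P \Box P) \cong Aut(P) \wr S_2$ (the dihedral group of order $8$, for $n \ge 2$; the case $n=1$ is trivial) settles this completely, whereas the paper's only gesture in this direction is the projection argument inside Theorem~\ref{cartesian}, which itself tacitly relies on the factorization theorem you cite. In short, the two proofs organize the count identically, but yours determines the full automorphism group and then reads off the orbits, which is what the lemma actually requires.
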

	\begin{proof}
		Let $\Pi_P = \{U_1, U_2, \ldots, U_{r}\}$ be the partitions of $P$.   Then by  we get the partitions for $G=P \Box P$ as  
		\begin{align*}\Pi_G &= \{U_1 \times U_1, U_2 \times U_1,  \ldots, U_{r} \times U_1, \\&U_1 \times U_2, U_2 \times U_2,  \ldots, U_{r} \times U_2, \\& \ldots, \\& U_1 \times U_{r}, U_2 \times U_{r},  \ldots, U_{r} \times U_{r}  \}
		\end{align*}
		among which $U_i \times U_j$ is same as $U_j \times U_i$ for all $i, j = 1,2, \ldots, r$.  Thus the number of unique elements in $\Pi_G$ is reduced to $\frac{r \times (r+1)}{2}$. Therefore the transitivity number of $G$ is $\frac{r \times (r+1)}{2}$.
	\end{proof}
	Next theorem is for Cartesian and strong product of Graphs
	\begin{theorem}[Cartesian and strong product]\label{cartesian}
		Let $G$ and $H$ be two non isomorphic graphs with transitivity numbers $m$ and $n$ respectively. Then the transitivity number of $G* H$ is $m\times n$, where $*\in \{\Box,\boxtimes\}$.
	\end{theorem}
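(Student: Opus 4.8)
The plan is to recast the transitivity number as the number of orbits of the automorphism group acting on the vertex set: by definition $r_{G*H}$ is the number of $\Omega$-classes, and two vertices are $\Omega$-related exactly when some automorphism of $G*H$ carries one to the other. Writing $\{U_1,\dots,U_m\}$ for the transitivity partition of $G$ and $\{W_1,\dots,W_n\}$ for that of $H$, the goal becomes showing that the orbits of $\mathrm{Aut}(G*H)$ on $V(G)\times V(H)$ are precisely the products $U_i\times W_j$, of which there are exactly $mn$.

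First I would prove the easy inclusion. For any $\sigma\in\mathrm{Aut}(G)$ and $\tau\in\mathrm{Aut}(H)$, the coordinatewise map $(g,h)\mapsto(\sigma(g),\tau(h))$ is an automorphism of $G*H$ for both $*=\Box$ and $*=\boxtimes$: one checks that each defining adjacency condition (equality in one coordinate together with an edge in the other, and, for $\boxtimes$, edges in both coordinates) is preserved, since $\sigma$ and $\tau$ preserve edges and non-edges in their respective factors. Consequently, if $g\,\Omega\,g'$ in $G$ and $h\,\Omega\,h'$ in $H$ then $(g,h)\,\Omega\,(g',h')$, so each block $U_i\times W_j$ lies inside a single orbit. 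This already gives $r_{G*H}\le mn$.

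The substantive direction is the converse: I must rule out any automorphism of $G*H$ that carries a vertex of $U_i\times W_j$ into a different block $U_{i'}\times W_{j'}$. For this I would invoke the structure theorems for automorphism groups of products---Sabidussi's description for the Cartesian product and its strong-product counterpart (due to D\"orfler)---which assert that, once each factor is written in its unique prime factorization, every automorphism acts by permuting \emph{isomorphic} prime factors and then applying isomorphisms coordinatewise. Provided $G$ and $H$ share no common prime factor, no automorphism can interchange the $G$-coordinate with the $H$-coordinate; each automorphism of $G*H$ then projects to a well-defined $\sigma\in\mathrm{Aut}(G)$ and $\tau\in\mathrm{Aut}(H)$, so the blocks $U_i\times W_j$ can neither split nor merge. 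Counting them yields exactly $mn$ orbits.

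The hard part---and, I suspect, a genuine gap in the statement---is the hypothesis. The easy direction uses nothing about the factors, but the converse truly requires $G$ and $H$ to be \emph{relatively prime} with respect to the product, not merely non-isomorphic: a shared prime factor creates a factor-swapping automorphism that merges blocks. For instance, $P_3$ and $P_3\,\Box\,P_2$ are non-isomorphic and each $2$-transitive, yet $P_3\,\Box\,(P_3\,\Box\,P_2)=P_3\,\Box\,P_3\,\Box\,P_2$ is only $3$-transitive rather than $4$-transitive, because interchanging the two $P_3$ factors identifies the ``end--middle'' and ``middle--end'' vertex types. I would therefore either strengthen the hypothesis to relative primeness---which holds automatically when $G$ and $H$ are non-isomorphic prime graphs, as in Lemma~\ref{lem:Pmn}---or restrict attention to prime factors; under that corrected hypothesis the orbit computation above goes through verbatim for both the Cartesian and the strong product.
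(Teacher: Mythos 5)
Your first half coincides with the paper's: both of you observe that coordinatewise maps $(g,h)\mapsto(\sigma(g),\psi(h))$ are automorphisms of $G*H$, so each block $U_i\times W_j$ lies inside a single orbit and $r_{G*H}\le mn$. The two arguments part ways on the converse, and yours is the genuinely different --- and the correct --- route. The paper argues elementarily: given $\tau\in\mathrm{Aut}(G*H)$ sending $(x,y)\in U_i\times V_j$ to $(x',y')\in U_p\times V_q$ with $i\neq p$, it fixes $y$, defines $\phi_1(x)=p_G(\tau(x,y))$, and asserts that ``since $y$ is fixed'' $\phi_1$ is one-one, onto and edge-preserving, hence an automorphism of $G$ carrying $U_i$ to $U_p$, a contradiction. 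That assertion is unjustified, and your counterexample shows it is false: take $G=P_3$, $H=P_3\Box P_2$, and let $\tau$ be the automorphism of $G\Box H=P_3\Box P_3\Box P_2$ swapping the two $P_3$ coordinates; then for any fixed $y=(a,c)$ the map $\phi_1(x)=p_G(\tau(x,(a,c)))=a$ is \emph{constant}, not a bijection. So the paper's proof breaks at exactly the step your structure-theorem argument is designed to control, and the theorem as stated is false: your factors are non-isomorphic with transitivity numbers $2$ and $2$, yet $P_3\Box P_3\Box P_2$ has only three orbits (end--end, end--middle, middle--middle types, each crossed with $V(P_2)$), not four.

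Your repaired hypothesis --- that $G$ and $H$ be relatively prime with respect to $*$ --- is the right one, and it is exactly what the paper's Lemma~\ref{lem:Pmn} implicitly enjoys, since distinct paths are prime under $\Box$; with it, Sabidussi's theorem forces every automorphism to respect the split between the prime factors of $G$ and those of $H$ and thus to project onto $\mathrm{Aut}(G)\times\mathrm{Aut}(H)$, after which the orbit count is immediate. One caveat you should record for the strong product: D\"orfler's description of $\mathrm{Aut}(G\boxtimes H)$ requires the factors to be $S$-thin (no two vertices with identical closed neighborhoods) in addition to relatively prime. Without thinness there are extra automorphisms permuting true twins; these happen to act within blocks, but that needs a separate argument (e.g.\ passing to the thin quotient), so either add the thinness hypothesis or supply that step. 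With that noted, your proposal is a correct proof of a corrected statement, whereas the paper's converse direction contains a genuine gap.
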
	
	\begin{proof}
		First let $*\in \{\Box,\boxtimes\}$. Let $G/\text{Aut}(G)=\{U_1, U_2,\ldots,U_m\}=\Pi_1 $ and $H/\text{Aut}(H)=\{V_1, V_2,\ldots,V_n\}=\Pi_2$. To prove $G* H/\text{Aut}(G* H)=\Pi_1\times \Pi_2$. Let $(x,y), (x',y')\in U_i\times V_j$. We have to prove there exists an automorphism $\tau$ on $G* H$ such that $\tau((x,y))= (x',y')$ and there does not exist an automorphism on $G* H$ with image of any vertex of $U_i\times V_j$ is in $V(G* H)\setminus (U_i\times V_j)$. Since $x,x'\in U_i$ and $y,y'\in V_j$, then there exist automorphisms $\phi :G\rightarrow G$ and $\psi :H\rightarrow H$ such that $\phi(x)=x'$ and $\psi(y)=y'$. Define $\eta : G* H\rightarrow G* H$ as $\eta((x,y))=(\phi(x),\psi(y))$. We can see that $\tau$ is a bijection. Also these $\tau$ partitions $G* H$ into $\Pi_1\times \Pi_2$. It is enough to prove $\tau$ is a graph isomorphism. If $(x,y), (x',y')$ are adjacent, then either $x=x', \{y,y'\}\in E(H)$ or $\{x,x'\}\in E(G), y=y'$. Let us assume $x=x'$. Then we can choose $\phi=e$, the identity mapping from $G$ to $G$. Since $\{y,y'\}$ is an edge, and since $\psi$ is an isomorphism, we can see that $\{\psi(y),\psi(y')\}$ is an edge, that is $(x,\psi(y)),(x', \psi(y'))$ is an edge. Hence $\tau $ preserves the isomorphism. Clearly these kinds of $\tau$ does not map a vertex from $U_i\times V_j$ to any vertex of $V(G* H)\setminus (U_i\times V_j)$.

		Suppose there is an automorphism $\tau: G* H\rightarrow G* H$ such that $\tau (x,y)=(x',y')$, where $(x,y)\in U_i\times V_j$ but $(x',y')\in U_p\times V_q$ where either $i\not=p$ or $j\not=q$ or $i\neq p,j\neq q$.  Without loss of generality, we can assume that $i\not=p$.  Let us define $\phi_1:G\rightarrow G$ as follows.  For a fixed $y\in H$, define $\phi_1(x)=p_G(\tau(x,y))=p_G(x',y')=x'$, where $p_G$ is the usual projection.  Since $y$ is fixed, we can see that $\phi_1$ is one -one and onto.  Also we can see that $\phi_1$ is edge preserving.  That means $\phi_1$ is an automorphism, mapping $U_i$ to $U_p$, which is a contradiction that $U_i$ and $U_p$ are different partitions in $G/\text{Aut}(G)$.

		But, in this case, the projection of $\tau $ on the factor $G$, is an automorphism which maps some element from $U_i$ to some element in $U_p$. But this leads to a contradiction that $U_i$ and $U_p$ are different partitions of $G/\text{Aut}(G)$.
		
		So we can conclude that $G* H/\text{Aut}(G* H)=\Pi_1\times \Pi_2$. Hence the transitive number of $G* H$ is $m\times n$.
	\end{proof}
	We have an immediate corollary for the theorem~\ref{cartesian}
	\begin{corollary}
		If  $G$ and $H$ be two isomorphic graphs with transitivity numbers $m$. Then the transitivity number of $G* H$ is $\frac{n(n+1)}{2}$, where $*\in \{\Box,\boxtimes\}$.
		
	\end{corollary}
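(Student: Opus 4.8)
The plan is to build directly on Theorem~\ref{cartesian}. Since $G$ and $H$ are isomorphic they share a single transitivity number, which I will call $n$ (this is the quantity the stated formula intends), and I may label their partitions compatibly: fixing an isomorphism $\phi : G \to H$, I write the partitions of $G$ as $U_1, \ldots, U_n$ and those of $H$ as $V_1, \ldots, V_n$ with $\phi(U_i) = V_i$ for each $i$. By Theorem~\ref{cartesian}, in the absence of any extra symmetry the vertex classes of $G * H$ would be exactly the $n^2$ sets $U_i \times V_j$. The isomorphism $G \cong H$, however, supplies an additional automorphism that was unavailable in the non-isomorphic setting, and the whole point is to track how it collapses these $n^2$ classes.

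First I would introduce the transposition automorphism $\sigma : G * H \to G * H$ defined by $\sigma(x,y) = (\phi^{-1}(y), \phi(x))$, and verify that $\sigma$ is an automorphism for both $* = \Box$ and $* = \boxtimes$. This is a routine check: the adjacency conditions defining each product are symmetric under interchanging the two coordinates, and $\phi, \phi^{-1}$ preserve both adjacency and equality within their factors, so each defining clause is carried onto another by $\sigma$. Next I would compute the action of $\sigma$ on the classes. If $(x,y) \in U_i \times V_j$, then $\phi^{-1}(y) \in U_j$ and $\phi(x) \in V_i$, so $\sigma$ sends $U_i \times V_j$ onto $U_j \times V_i$. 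Combined with the factor automorphisms $(x,y)\mapsto(\psi(x),\chi(y))$ already exhibited in Theorem~\ref{cartesian}, which act transitively within each single $U_i \times V_j$, this shows that $U_i \times V_j$ and $U_j \times V_i$ now lie in one transitivity class.

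The counting step is then immediate. The $n^2$ classes are identified in symmetric pairs: each of the $n$ diagonal classes $U_i \times V_i$ is fixed setwise, while the $\binom{n}{2}$ off-diagonal pairs $\{U_i \times V_j, U_j \times V_i\}$ with $i < j$ fuse two-to-one. The number of distinct classes is therefore $n + \binom{n}{2} = \frac{n(n+1)}{2}$, matching Lemma~\ref{lem:Pn} and establishing the corollary.

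The step I expect to be the real obstacle is the lower bound: showing that no \emph{further} merging occurs, i.e. that $U_i \times V_j$ and $U_k \times V_l$ become interchangeable \emph{only} when $\{i,j\} = \{k,l\}$. Here one must argue that every automorphism of $G * H$ either preserves the two factors or interchanges them (the latter possible precisely because $G \cong H$), so that the induced action on index pairs is either $(i,j)\mapsto(i,j)$ or $(i,j)\mapsto(j,i)$. The projection argument of Theorem~\ref{cartesian} disposes of the factor-preserving case verbatim, and the factor-swapping case reduces to it after pre-composing with $\sigma^{-1}$. This reliance on the product-automorphism structure, essentially Sabidussi's description for $\Box$ and its analogue for $\boxtimes$, is what needs the most care, even though each individual projection computation is elementary.
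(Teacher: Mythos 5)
Your construction of the swap automorphism $\sigma(x,y)=(\phi^{-1}(y),\phi(x))$ and the resulting merging of $U_i\times V_j$ with $U_j\times V_i$ is correct, and it actually supplies more detail than the paper does: the paper offers no proof at all, presenting the statement as immediate from Theorem~\ref{cartesian} via the same identification $U_i\times V_j\sim U_j\times V_i$ used for paths in Lemma~\ref{lem:Pn}. The genuine gap is precisely the step you flag as the ``real obstacle'': the claim that every automorphism of $G*H$ either preserves the two factors or interchanges them. This does not follow from Sabidussi's theorem, and it is false in general. Sabidussi's theorem says that $\mathrm{Aut}(G\Box H)$ is generated by automorphisms of the \emph{prime} factors together with transpositions of isomorphic \emph{prime} factors; when $G$ (and hence $H$) is composite, such transpositions can cross the $G$/$H$ boundary and produce extra merging. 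Concretely, let $G=H=P_3\Box P_5$, so that $n=2\cdot 3=6$. Then $G\Box H\cong P_3\Box P_3\Box P_5\Box P_5$, whose vertex classes are indexed by an unordered pair of $P_3$-classes together with an unordered pair of $P_5$-classes, giving $3\cdot 6=18$ classes rather than $\frac{6\cdot 7}{2}=21$. For instance, the automorphism swapping the two $P_3$ coordinates while fixing both $P_5$ coordinates sends a class $U_i\times V_j$ to a class $U_k\times V_l$ with $\{i,j\}\neq\{k,l\}$. So the dichotomy underlying your lower bound fails, and indeed the corollary as stated is false; no argument can close this gap without an extra hypothesis such as primeness of $G$ with respect to $\Box$ (which holds for the paths in Lemma~\ref{lem:Pn}, which is why that count is right).

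Two further cautions. For the strong product the situation is worse still: without thinness assumptions on the factors (in the sense of D\"orfler and Imrich), $\mathrm{Aut}(G\boxtimes H)$ need not respect any factorization, so even the preserve-or-swap alternative for prime factors is unavailable. And the ``projection argument of Theorem~\ref{cartesian}'' that you propose to reuse verbatim is itself unsound: for fixed $y$, the map $x\mapsto p_G(\tau(x,y))$ need not be well defined as an automorphism of $G$, and in fact Theorem~\ref{cartesian} already fails for non-isomorphic composite factors (take $G=P_3\Box K_2$ and $H=P_3$: the theorem predicts $2\times 2=4$ classes for $G\Box H\cong P_3\Box P_3\Box K_2$, but there are only $3$). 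Your overall architecture --- upper bound by exhibiting explicit automorphisms, lower bound via a structure theorem for $\mathrm{Aut}(G*H)$ --- is the right one, but the structure theorem you invoke is simply not true at this level of generality.
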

	\begin{theorem}
		Let $G_1$ and $G_2$ be two path graphs of size $m$ and $n$ respectively. Let $r_m$ and $r_n$ be their respective transitivity numbers. Then the transitivity number of their Cartesian product $G = G_1 \Box G_2$ is $r \le r_m \times r_n$.
	\end{theorem}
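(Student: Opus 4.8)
The plan is to treat this statement as a direct corollary of the two preceding lemmas by splitting into the cases $m \neq n$ and $m = n$, since these are exactly the situations already analyzed. The only genuinely new ingredient is an elementary inequality that unifies the two cases into the single bound $r \le r_m \times r_n$.

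First I would suppose $m \neq n$. Then $G_1 = P_m$ and $G_2 = P_n$ are paths of different orders, hence non-isomorphic, and Lemma~\ref{lem:Pmn} applies directly: the transitivity number of $G_1 \Box G_2$ is exactly $r_m \times r_n$. In this case the claimed inequality holds with equality, so nothing further is required. Next I would suppose $m = n$, so that $G_1 \cong G_2$, $r_m = r_n$, and $G = P_m \Box P_m$. Here Lemma~\ref{lem:Pn} gives $r = \frac{r_m(r_m + 1)}{2}$, and it remains only to observe that $\frac{r_m(r_m+1)}{2} \le r_m^2 = r_m \times r_n$, which is equivalent to $r_m + 1 \le 2r_m$, i.e.\ to $r_m \ge 1$; this always holds since a (connected, nonempty) graph has at least one transitivity partition. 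Combining the two cases yields $r \le r_m \times r_n$ throughout.

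I expect no real obstacle: the substantive work was already carried out in Lemmas~\ref{lem:Pmn} and~\ref{lem:Pn}, and what remains is a case split together with the trivial bound $\frac{r(r+1)}{2} \le r^2$. The mild subtlety worth flagging is simply recognizing that the inequality (rather than equality) is forced precisely by the isomorphic case $m = n$, where the extra factor-swapping automorphisms collapse $U_i \times U_j$ and $U_j \times U_i$.

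If one preferred a uniform argument that avoids the case analysis entirely, the cleanest route is to exhibit the subgroup $K = \{\,(g,h)\mapsto(\phi(g),\psi(h)) : \phi \in \mathrm{Aut}(P_m),\ \psi \in \mathrm{Aut}(P_n)\,\}$ of $\mathrm{Aut}(G)$, to note that its orbits are exactly the products $U_i \times V_j$ (so $K$ has $r_m \times r_n$ orbits), and then to invoke the standard fact that the orbits of the larger group $\mathrm{Aut}(G)$ are unions of $K$-orbits. It follows immediately that the number of $\mathrm{Aut}(G)$-orbits, namely $r$, is at most $r_m \times r_n$. This conceptual version has the advantage of explaining \emph{why} equality can fail only when additional automorphisms of the product appear beyond those inherited from the factors.
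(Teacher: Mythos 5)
Your proposal is correct and follows essentially the same route as the paper, whose proof is simply the citation of Lemma~\ref{lem:Pmn} (the case $m \neq n$, giving equality) and Lemma~\ref{lem:Pn} (the case $m = n$, giving $\frac{r_m(r_m+1)}{2}$); you merely make explicit the case split and the trivial inequality $\frac{r_m(r_m+1)}{2} \le r_m^2$ that the paper leaves unstated. Your closing orbit-counting argument via the subgroup $K \le \mathrm{Aut}(G)$ is a nice uniform alternative, but it is supplementary to, not a replacement for, the paper's approach.
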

	\begin{proof}
		Proof follows from lemma ~\ref{lem:Pmn} and lemma~\ref{lem:Pn}.
	\end{proof}
	
	
	\section{Concluding Remarks}
	In this paper, we formalize the notion of generalized vertex transitivity in graphs. We also introduce the graph invariant known as transitivity number $r$. Knowing the transitivity number of a graph can aid in efficient graph computations. Once we know that a graph $G$ is $r$ transitive with partitions $(V_1, \ldots, V_r)$ then we can devise fast algorithms for computing various properties of $G$. This is because, now we need not evaluate all the vertices in $G$, but only $r$ vertices representing each partition.
	\section*{Acknowledgements}
	This work was supported by the National Post Doctoral Fellowship (N-PDF) No. PDF/2016/002872 from Science and Engineering Research Board (SERB), Department of Science and Technology (DST), Government of India.
	
\end{document}